\newtheorem{theorem}{Theorem}[section]
\newtheorem{lemma}[theorem]{Lemma}
\begin{document}
%
\title{Trust Evaluation Mechanism for User Recruitment in Mobile Crowd-Sensing in the Internet of Things}

\author{Nguyen~Binh~Truong,~\IEEEmembership{Member,~IEEE,}
        Gyu~Myoung~Lee*,~\IEEEmembership{Senior Member,~IEEE,}
				Tai-Won~Um,
        and~Michael~Mackay,~\IEEEmembership{Member,~IEEE}
\thanks{N.B. Truong is with Data Science Institute, Department of Computing, Imperial College London, London,
SW7 2AZ United Kingdom.}
\thanks{G.M. Lee and M. Mackay are with Department of Computer Science, Liverpool John Moores University, Liverpool,
L3 3AF United Kingdom.}
\thanks{T.W. Um is with Department of Information and Communication Engineering, Chosun University, Gwangju, Korea.}
\thanks{*G.M. Lee is the corresponding author}
}

\markboth{IEEE Transaction on Information Forensics and Security, February~2019}%
{Truong \MakeLowercase{\textit{et al.}}: Trust Evaluation Mechanism for User Recruitment in Mobile Crowd-Sensing in the Internet of Things}

\maketitle

\begin{abstract}
Mobile Crowd-Sensing (MCS) has appeared as a prospective solution for large-scale data collection, leveraging built-in sensors and social applications in mobile devices that enables a variety of Internet of Things (IoT) services. However, the human involvement in MCS results in a high possibility for unintentionally contributing corrupted and falsified data or intentionally spreading disinformation for malevolent purposes, consequently undermining IoT services. Therefore, recruiting trustworthy contributors plays a crucial role in collecting high-quality data and providing better quality of services while minimizing the vulnerabilities and risks to MCS systems. In this article, a novel trust model called Experience-Reputation (E-R) is proposed for evaluating trust relationships between any two mobile device users in a MCS platform. To enable the E-R model, virtual interactions among the users are manipulated by considering an assessment of the quality of contributed data from such users. Based on these interactions, two indicators of trust called Experience and Reputation are calculated accordingly. By incorporating the Experience and Reputation trust indicators (TIs), trust relationships between the users are established, evaluated and maintained. Based on these trust relationships, a novel trust-based recruitment scheme is carried out for selecting the most trustworthy MCS users to contribute to data sensing tasks. In order to evaluate the performance and effectiveness of the proposed trust-based mechanism as well as the E-R trust model, we deploy several recruitment schemes in a MCS testbed which consists of both normal and malicious users. The results highlight the strength of the trust-based scheme as it delivers better quality for MCS services while being able to detect malicious users. We believe that the trust-based user recruitment offers an effective capability for selecting trustworthy users for various MCS systems and, importantly, the proposed mechanism is practical to deploy in the real world.
\end{abstract}

\begin{IEEEkeywords}
Internet of Things, Mobile Crowd-Sensing, Quality of Data, User Recruitment, Trust.
\end{IEEEkeywords}

\IEEEpeerreviewmaketitle

\section{Introduction} \label{INT}
\IEEEPARstart{E}{merging} Internet of Things (IoT) applications and services depend heavily on data collected from sensing campaigns such as sensor networks and crowd-sourcing. Traditional sensor networks deploy sensors in the terrain to acquire data on a variety of aspects of human lives but they have never reached their full potential and been successfully implemented in the real world. This is due to a number of unsolved challenges, such as high installation cost and insufficient spatial coverage \cite{r01}. The new sensing paradigm called Mobile Crowd-Sensing (MCS), which is a sort of crowd-sourcing that leverages built-in sensors and applications in smart mobile devices, has recently emerged as a promising solution for IoT sensing campaigns \cite{r02}. MCS allows increasing numbers of mobile device owners to share sensed data and, in exchange, the owners get incentives for their contributions. The potential for data collected from smart mobile devices are diverse such as local news, noise levels, traffic conditions, and social knowledge. With diversified spatial coverage due to the mobility of large-scale mobile users, MCS is expected to enable a variety of IoT services including public safety, traffic planning, environment monitoring, and social recommendation. This human-powered sensing approach augments the capabilities of existing IoT infrastructures without introducing additional costs, resulting in a win-win strategy for both users and IoT systems.

However, the introduction of MCS also poses some significant challenges such as cross-space data mining, retaining privacy and providing high-quality data \cite{r03}. Low-quality data could lead to numerous difficulties in providing high-quality services or even damage MCS systems. Certain methods have been proposed for improving the quality of data (QoD) in MCS, including estimation and prediction of sensing data, along with statistical processing for identifying and removing outliers in sensing values \cite{r04}. Data selection techniques are also used to filter low-quality or irrelevant data and to generate a high-quality dataset for further processing in IoT services \cite{r05}. Another approach is the use of a recruitment mechanism for selecting trustworthy users who are expected to contribute high-quality data. An appropriate recruitment scheme would therefore not only reduce system costs but also minimize vulnerabilities, risks and potential attacks in MCS systems.

In this article, a novel trust evaluation mechanism called Experience-Reputation (E-R) is proposed for evaluating trust relationships between any two mobile device users in a MCS platform. To establish and evaluate the trust relationships, we utilize our conceptual trust model in the IoT environment called Reputation-Experience-Knowledge (REK), which comprises of the trust indicators (TIs) called Reputation, Experience and Knowledge proposed in \cite{r06, r07}. To employ the E-R mechanism, virtual interactions between service requesters and data contributors are generated when one user requests a MCS service and other users contribute their sensing data to fulfill it. These interactions are then assessed by performing a QoD assessment over the contributed data. Based on these interactions, Experience relationships between service requesters and data contributors are established and calculated. Then, based on all of these Experience relationships between the users, the Reputation of each user is calculated accordingly. Trust relationships between users are finalized by combining the two associated TIs; Experience and Reputation. As a result, the proposed trust-based recruitment scheme examines the trust relationships between a service requester and potential participants in order to select the most trustworthy contributors for a requested service.

To verify the effectiveness of a user recruitment scheme, we propose an evaluation model for the quality of MCS service (QoS) based on the QoD assessment of data contributed to the service. We simulate the trust-based recruitment scheme along with two popular recruitment mechanisms using predictive algorithms in the same MCS testbed for comparison. The results indicate that the trust-based scheme not only provides better QoS for MCS services but also efficiently differentiates between high-quality, low-quality and malicious users. As a result, using the proposed trust evaluation mechanism for recruiting trustworthy data contributors not only prevents adversaries from contributing falsified data and potential attacks but also motivates users to provide high-quality data in order to be recruited in the next sensing task, hence further strengthening the MCS platform.

The main contributions of this article are three-fold:
\begin{itemize}
	\item The E-R trust mechanism for evaluating trust relationships between MCS users consisting of the Experience and Reputation models.
	\item A practical real-world deployable trust-based user recruitment scheme leveraging the QoD assessment and the E-R mechanism.
	\item A simulation for a MCS testbed consisting of three types of user models deploying different user recruitment schemes including our trust-based user recruitment, and an evaluation model for QoS based on QoD assessment.

\end{itemize}

The rest of the article is organized as follows. Section II presents background and related work on the MCS platform and user recruitment schemes. Section III introduces the trust-based MCS system model and components and the following section specifies the proposed trust evaluation mechanism including the Experience and Reputation computational models in detail. Section V describes the simulation scenarios including the testbed and user recruitment algorithms. Section VI presents the outcomes with analysis and discussion. The last section concludes our work and outlines future research directions.
\section{Mobile Crowd-Sensing Background and Related Work} \label{RLW}

\subsection{Background on Mobile Crowd-Sensing in the IoT}
\begin{figure}[ht]
	\centering
	\captionsetup{justification=centering}
	\includegraphics[width=0.5\textwidth]{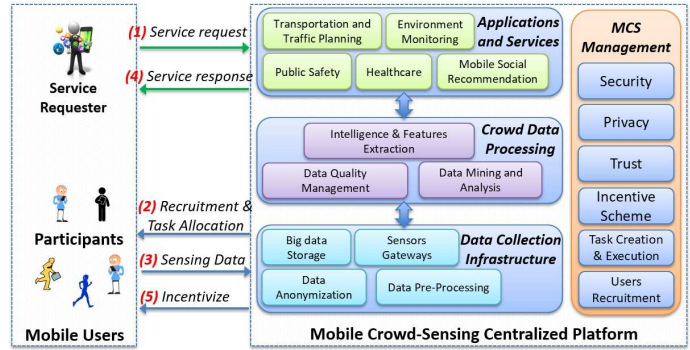}
	\caption{A Centralized MCS Platform Architecture}
	\label{fig1}
\end{figure}
In IoT ecosystems data from various sources such as actuations, sensors, and smart devices are gathered, analyzed and processed to provide ubiquitous and intelligent services \cite{r08, r09}. In this environment, users could contribute to the process through sharing not only data sensed from their own devices’ sensors but also their incidents and knowledge over social networks without the need to pre-allocate sensing devices in the area \cite{r10}, hence saving deployment costs \cite{r11, r12}. This prospect coined the term MCS that has since gained popularity as a promising data acquisition approach for the IoT because of the increasing usage of mobile smart devices. These devices are equipped with many different types of sensors such as Global Positioning System (GPS), accelerometers, a gyroscope, microphone and camera along with advanced features including processing and wireless communications that can efficiently support crowd-sensing processes \cite{r13, r14}. In a MCS platform, heterogeneous information regarding different aspects of human life is collected from mobile devices before being aggregated, analyzed and mined for supporting a variety of IoT applications and services (Fig. \ref{fig1}).

The data acquisition models for a MCS system can be categorized as either opportunistic or participatory \cite{r01}. In optimistic sensing systems, data is automatically collected using a background process, such as reporting speed and GPS coordinates in navigation services while driving. Sensing decisions are application or device-driven, meaning that the involvement of participants is minimal, thus, user recruitment is not necessary. Conversely, in participatory sensing systems, participants agree to a requested sensing task that is dispatched from a centralized MCS platform. Users are explicitly engaged in the sensing process by accepting or rejecting the sensing request; and by actively collecting data such as taking a picture, reporting an available parking lot and manually providing information (illustrated as step (2) and (3) in Fig. \ref{fig1}). Such sensing data can be extracted and directly consumed by end-users for supporting some prompt services or further aggregated in the cloud for large-scale sensing and community intelligent mining \cite{r04}. It is worth to note that in both data acquisition models, the participant trajectories could be revealed by the MCS platform, resulting in the risk of privacy leakage. As a consequence, mobile users may not be enthusiastic to contribute sensing data to the platform even though they get incentives (step (5) in Fig. \ref{fig1}). Privacy-preserving mechanisms for MCS should also therefore be carried out in the MCS platform \cite{r1_4}.

Generally, the life cycle of a MCS system comprises of three phases: `task creation and user recruitment', `task execution', and `data collection and processing' \cite{r15}. More recently, Zhang \textit{et al.} have divided the life cycle into four phases: `task execution', `task assignment', `individual task execution', and `sensing data integration' \cite{r16}, where the `task assignment' phase recruits users and assigns individual sensing tasks for the participants. Nonetheless, the user recruitment scheme plays a key role in the success of any participatory MCS system. The high density of mobile device users, especially in urban areas, allows a MCS system to select only a subset of all available data contributors; and obviously, different user recruitment schemes may lead to different system performances. In order to obtain high quality data, a simple solution is to recruit as many participants as possible \cite{r5_3}. However, collecting data from collocated users may result in data redundancy which cannot further improve the QoD while posing the waste of incentive cost, storage space and imposing network overheads. Therefore, a good recruitment scheme not only selects proper users for providing high-quality data but also allows MCS service providers to manage expenditure by considering incentive costs based on users’ contributions. These MCS systems are tailored to a centralized MCS platform illustrated in Fig. \ref{fig1}, which facilitates major system control operations including the user recruitment.

\subsection{Related Work}
A variety of user recruitment schemes in a centralized MCS platform have been investigated. Reddy \textit{et al.} have proposed a recruitment mechanism in a participatory sensing platform considering some core attributes such as geographic and temporal coverage and user behaviors for defining participant profiles comprising of availability, reputation and cost in their recruitment policy \cite{r17}. Standing on these attributes, Karaliopoulos \textit{et al.} have come up with a deterministic and stochastic mobility model for solving an optimization problem on cost minimization and user location in their recruitment policy \cite{r18}. Lately, other researchers have employed piggyback crowd-sensing techniques for gathering more information from mobile-device owners such as phone call, GPS coordination, and mobile application usages. As a result, these proposed recruitment mechanisms are able to predict geographical coverage and user availability; thus, these mechanisms are capable of determining a minimum number of participants for a sensing task in an energy-efficient recruitment strategy \cite{r19, r20, r21}. For instance, the authors in \cite{r23} have demonstrated a recruitment policy based on statistics of social services usage to compute a 'sociability' metric, indicating the willingness of users to participate in sensing tasks. Wang \textit{et al.} have theoretically leveraged mobile social networks such as Facebook, Twitter and FourSquare as the medium for information sharing and propagation in a novel recruitment platform and proposed two recruitment algorithms. The ultimate goal is to select a near-optimal set of social network users used as seeds (i.e., influenced users) in order to maximize the temporal-spatial coverage of MCS sensing tasks \cite{minor_1}. The authors in \cite{r5_6} have proposed a prediction-based recruitment mechanism considering a factor called `contact probability' indicating whether two MCS users are in the same points of interest (PoIs). They have used a semi-Markov model to determine the probability distribution of the users' arrival time at a PoI to calculate the inter-user contact probability, which is used in a prediction strategy to recruit users with the purpose of lowering data uploading cost. Similarly, Li \textit{et al.} considered a recruitment scheme in a large-scale piggyback MCS system with dynamic and heterogeneous sensing tasks with the aim of minimizing the number of participants while still achieving a stable task coverage \cite{r5_1}. Most of the aforementioned recruitment approaches have the same purposes of developing an energy-efficient and cost-effective recruitment strategy by minimizing the sensing costs for a MCS service provider while guaranteeing certain requirements of requested services such as sensing area coverage. These approaches normally use an auction mechanism for negotiating incentives with mobile-device users \cite{r21, r22}. However, such recruitment mechanisms need to obtain location traces, history of phone calls, and social services personal information, which could pose the risk of serious users privacy leakage. Moreover, the quality of the contributed sensing data from the recruited users is largely neglected. There are multiple factors that affect the recruitment process, and the assurance of high-quality sensing data is of paramount importance.

\begin{itemize}
	\item \textbf{Quality of Data in MCS User Recruitment}
\end{itemize}
Recently, several efforts have proposed to recruit users based not only on time, location and statistical metrics but also on the QoD and the quality of information (QoI). Liu \textit{et al.} have taken the Quality of Information (QoI) requirements of sensing tasks into account for some incentive-based recruitment schemes using a bidding mechanism \cite{r1_1}. However, such schemes only work in a trustworthy environment with no malicious users due to the assumption that the recruited users will always provide data satisfying the QoI requirements for the sensing tasks as in the bid. Li \textit{et al.} also performed statistical analysis on the history of participation in previous sensing tasks for learning and predicting the QoD of the next sensing task \cite{minor_3}. The drawback of this idea is the requirement of calculating the similarity features among sensing tasks in order to recruit high-quality users. The ultimate goal of this work is similar to our work, but our approach is more practical and is not based on the calculation of this similarity. The authors in \cite{r5_2} proposed a participant selection scheme to provide high-QoI satisfaction while minimizing overall energy consumption. The scheme is based on two criteria called the remaining energy level and the `willingness of participation' defined by the rejection probability as the input for a constrained optimization solution. Again, this scheme only works if there is no malicious user who can purposely upload high-quality sensor readings as samples in order to be recruited and then turn out to provide false data to mislead the MCS system. QoI is not only used as a criterion for the user recruitment but also for incentive schemes in MCS systems. For example, the authors in \cite{r5_4} leverage QoI assessment to allocate suitable incentives for data contributors, resulting in a fair incentive mechanism.

\begin{itemize}
	\item \textbf{Reputation and Trust in MCS User Recruitment}
\end{itemize}
In order to deal with the presence of malicious users, reputation can be used as an indicator to perceive trustworthy participators in MCS sensing tasks on the assumption that regular users and adversaries behave differently. Kantarci \textit{et al.} have proposed a reputation-based MCS management approach adopting the M-Sensing auction approach \cite{r24} in which a statistical reputation is taken into account \cite{r25}. This statistical reputation is simply the percentage of true sensor readings over total readings. Pouryazdan \textit{et al.} have further employed a vote-based approach using a social network for evaluating users’ reputation \cite{r26, r27}. In this platform, users who have recently participated in a common sensing task form a community. All members of the community will then vote on the reputation of a newly joining user based on their similarity on sensor readings. The same authors have also considered a vote-based mechanism implementing a Subgame Perfect Equilibrium (SPE) and gamification techniques based on the calculation of users' reputation in the three-step recruitment process for improving the platform utility. The reputation scores are used as the core attributes for recruitment and incentivizing users in sensing tasks in \cite{minor_5, minor_6}. Nevertheless, such reputation-based recruitment schemes have unintentionally claimed the reputation is trust and have used the reputation on its behalf. In reality, reputation is one of several TIs partially affecting trust, but should not be confused with trust itself \cite{r06}. Moreover, the mechanisms used in such approaches are either too simple \cite{r25, r29}, based only on statistical sensor readings, or impractical assumptions \cite{r26, r27, minor_5, minor_6}. For instance, if two users join in the same sensing task, then there will be an interaction between them and they will get connected and directly interact with each other. Another assumption is that any user has the right to access all previous readings of other users in the same community for making up their votes. This results in the unfeasible deployment of these mechanisms in the real world. The authors in \cite{r1_2} have proposed a dynamic trust-based framework for recruiting suitable mobile users that provide high-quality sensing data on time. In that paper, an overall trust degree is calculated for selecting trustworthy users by aggregating from three factors: Direct Trust, Feedback Trust and an Incentive Function. The final goal is similar to our research work, however, the drawback of this approach is that it requires feedback from task recruiters for the Feedback Trust as well as to keep track of non-cooperative behaviors of mobile users for the Incentive Function. Restuccia \textit{et al.} has summarized recent research about developing a framework for discovering trust in MCS \cite{r5_5}. They have furthermore discussed current challenges and different approaches for evaluating trust through a collection of trust indicators.

Given this state-of-the-art, we propose a trust evaluation mechanism that can be effectively used to recruit trustworthy users while still being practically deployable for real-world services.
\section{E-R TRUST MECHANISM IN MCS PLATFORM: MODEL AND SYSTEM COMPONENTS} \label{ERTrust}
This section explores a MCS system model and scenarios, then introduces the E-R trust evaluation mechanism and its components deployed on top of a centralized MCS platform.

\subsection{MCS System Model and Scenario}
In a MCS platform, users share and provide data from their smart devices through being physically close (direct sensing model) or through a centralized MCS platform (indirect sensing model) \cite{r30}. In the direct sensing model, direct interactions exist between a requester and a provider such that sensing data is transmitted in a peer-to-peer manner. This sensing model uses a variety of wireless communication technologies such as Wi-Fi direct, ZigBee, Near-Field Communication (NFC) and Bluetooth over a social platform that operates among nearby smart device users \cite{r31, r32}. In the indirect sensing model, a requester and a provider indirectly interact via a centralized MCS platform. In this model, users can upload and obtain data to and from a cloud server through wide-range communication technologies such as Wi-Fi and 3G/4G. The indirect sensing model adopts the well-known service-oriented approach model called Sensing as a Service (S2aaS) \cite{r33}. Melino \textit{et al.} have further developed a Cloud-based SaaS model designated for MCS systems called Mobile Crowd-Sensing as a Service (MCSaaS) \cite{r34}.

Nevertheless, in any MCS model, a user can be either a requester that asks for a service or a data provider that collects and delivers data being used by another service; thus MCS users are directly or indirectly interacting with each other. This introduces either a `direct' or an `indirect' relationship between a service requester and a data provider depending on the sensing model deployed in a MCS system. In this article, we consider MCS systems that adopt the indirect sensing model with a participatory data acquisition style, which is overwhelmingly the most common in real-world usage. For such a system model, there is a centralized MCS cloud platform that handles and operates all the MCS processes including data collection and processing, task creation and execution; and the user recruitment and incentive schemes as illustrated in Fig. \ref{fig1}.

\subsection{E-R Trust Mechanism in the MCS Platform}
Trust can be considered as the `belief' of a trustor that the trustee will perform a task as the trustor’ expects. Trust plays an important role in supporting participants to overcome the perception of uncertainty and risks when making a decision \cite{r06}. In the MCS context, trust can be utilized to predict whether a mobile device user (i.e., the trustee) is going to provide high-quality data for a service requested by a service requester (i.e., the trustor). To establish and evaluate trust relationships between service requesters and data contributors, the REK trust model proposed in \cite{r06, r07, r35} is employed.

As depicted in Fig. \ref{fig2}, trust is comprised of three TIs called Reputation, Experience and Knowledge. Knowledge is identified as `direct trust' and evaluated by inferring trustees’ characteristics considering the trust context \cite{r06}. In the MCS context, Knowledge is constituted from a variety of attributes such as availability, the mobility model, GPS coordination and geography coverage. These attributes specify criteria for user ability and eligibility for fulfilling crowd-sensing campaigns. Experience and Reputation in contrast are identified as “indirect trust” and are quantified by accumulating previous interactions between mobile device users. Experience is a relationship between two users reflecting the personal perception of a trustor on a trustee. Reputation is the property of a user indicating the global consciousness of that user by considering all personal perceptions toward it \cite{r06}.

\begin{figure}[ht]
	\centering
	\captionsetup{justification=centering}
	\includegraphics[width=0.45\textwidth]{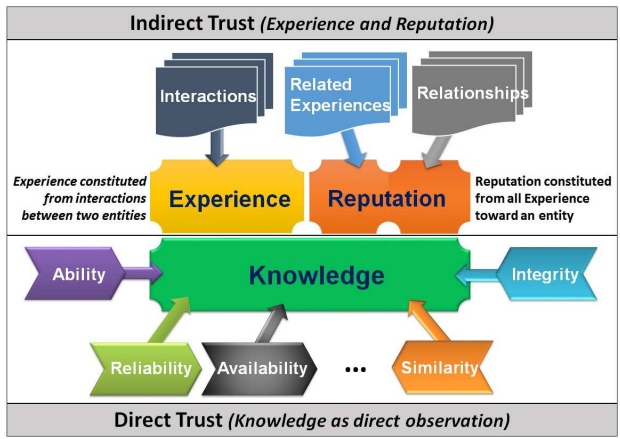}
	\caption{Trust Indicators and Attributes in the REK Trust Model}
	\label{fig2}
\end{figure}

Knowledge assessment requires various information from mobile device users that imposes critical privacy concerns in this context. Moreover, some information is challenging to retrieve or is not practical to implement in real-world scenarios \cite{r06}. For those reasons, we simplify the REK model which we will now call E-R that relies only on two indicators; Experience and Reputation. Knowledge is neglected in the E-R model, but some information could play a supplemental role in strengthening the evaluation of trust. As illustrated in Fig. \ref{fig3}, the E-R trust components are integrated in a centralized MCS cloud platform that establishes and manages virtual interactions between mobile-device users. An indirect interaction occurs after each sensing task is accomplished; and the interaction value is calculated based on the QoD provided to the MCS system (from data providers) and feedback (from service consumers). Experience between any two users is established and updated by an aggregation model on the virtual interactions. Based on all Experiences between users, the Reputation of each user is calculated accordingly. Finally, the value of a trust relationship is calculated by aggregating the Experience and Reputation. Detailed calculation models for the Experience, Reputation and trust are presented in Section IV.

\begin{figure}[ht]
	\centering
	\captionsetup{justification=centering}
	\includegraphics[width=0.48\textwidth]{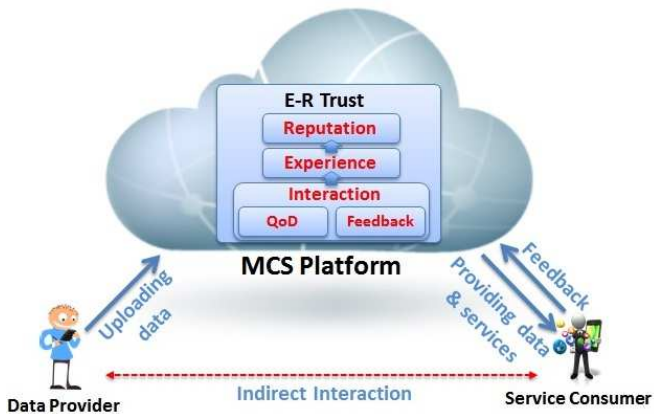}
	\caption{E-R Trust Mechanism in a Centralized MCS platform}
	\label{fig3}
\end{figure}

\subsection{Quality of Data Assessment}
The aim of MCS systems is to extract useful knowledge and intelligence from sensing data for delivering smart services; and to achieve this, high QoD must be ensured \cite{r36}. Low-quality data might cause numerous problems such as deception in decision making, consumer dissatisfaction and distrusting the system \cite{r37}. Well-known research works have pointed out that QoD consists of evaluating measurable properties that represent certain aspects of the data \cite{r37, r38}, and some data can be identified as high quality based on the measurements of these dimensions \cite{r37}. Six data quality dimensions are specified by Askham \textit{et al.} in \cite{r38} and have been widely accepted, namely Accuracy, Completeness, Consistency, Timeliness, Uniqueness, and Validity. Detailed analysis and measurement methodologies for the six dimensions have also been proposed in related articles. Therefore, based on the system requirements, context, and system goals these dimensions can be taken into consideration for the QoD assessment \cite{r40, r41}.

\begin{figure}[ht]
	\centering
	\captionsetup{justification=centering}
	\includegraphics[width=0.48\textwidth]{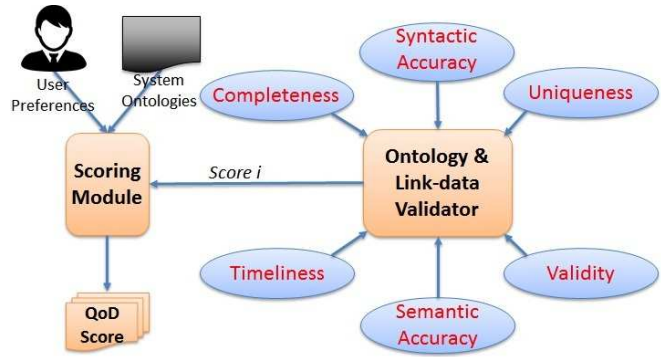}
	\caption{QoD Monitoring Module for traffic and parking sensors in the Wise-IoT project}
	\label{fig4}
\end{figure}

We have utilized the QoD calculation mechanisms in \cite{r37, r38} for measuring live data streaming QoD from traffic sensors and parking sensors deployed in Santander City Center, Spain as a result of the Wise-IoT\footnote{\url{http://wise-iot.eu/en/home}} project. As the data is presented in semantic form, we have proposed two further novel dimensions called Syntactic Accuracy and Semantic Accuracy in the QoD assessment \cite{r42}. These two dimensions are suitable for checking data syntax and semantics from live information produced by the sensors (Fig. \ref{fig4}) using predefined data quality rules as well as the ontology validating rules developed by EGM\footnote{\url{http://www.eglobalmark.com}} \cite{r42}. We believe this mechanism can be reused here for evaluating sensing data in a MCS platform because the underlying theoretical and practical QoD assessments are identical.

\subsection{User Feedback}
QoD is the most important indicator of how contributors fulfill an assigned sensing task but it may not be sufficient alone because QoD scores do not completely reflect the level of consumers’ satisfaction with the service provider. In this regard, feedback can complement the assessment of to what extent a service provider has accomplished a requested service. Feedback can be both implicit and explicit; and may or may not require human participation. Feedback could be obtained by directly asking customers to give opinions after a service has been provided. This approach has been used in many e-commerce services such as eBay, Amazon and Airbnb, which requires huge effort to attract users to anticipate; and opinions are sometimes biased. The implicit approach is based on calculation models with some predefined criteria to estimate the outcome, which normally do not require a human participant. For example, this has been applied in some networking protocols as an ACK message to indicate whether a packet or a file is transmitted successfully or unsuccessfully \cite{r45}.

However, this type of user feedback is out of scope of this article. In the E-R trust component we neglect the feedback mechanism at this stage and thus indirect interactions between users rely on QoD scores only. However, user feedback could be an important component for improving the quality of IoT services and we will consider it as part of further work.
\section{E-R TRUST EVALUATION MODEL} \label{ERTrustEva}

In this section, the mathematical calculation models for the E-R trust mechanism are described in detail.

\subsection{Experience Model}
Experience is an asymmetric relationship between two entities built up from previous interactions reflecting to what extent a trustor trusts a trustee. After each interaction, awareness between the trustor and the trustee is supposed to improve, and Experience should be maintained to correctly indicate the relationship between the two (illustrated in Fig. \ref{fig5}).

\begin{figure}[ht]
	\centering
	\captionsetup{justification=centering}
	\includegraphics[width=0.3\textwidth]{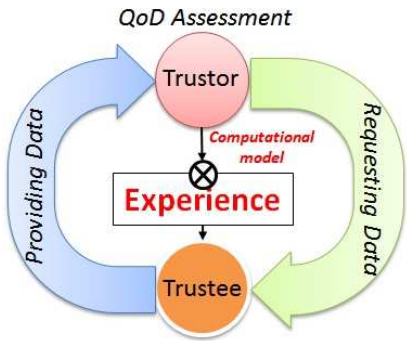}
	\caption{Experience computation model based on feedback mechanism}
	\label{fig5}
\end{figure}

The proposed Experience model for MCS systems follows human relationships investigated in sociological literature \cite{r46, r47}. That is, Experience increases due to cooperative interactions and decreases by uncooperative interactions. Experience also decays if no interactions occur after a period of time. The amount of the increase, decrease and decay depends on the intensity of interactions, interaction scores, and current Experience value. Therefore, Experience can be modeled using mathematical models as follows with the notations denoted in Table \ref{tb1_1}:

\begin{table}[h]
	\centering
	\caption{NOTATIONS USED IN THE EXPERIENCE MODEL}
	\label{tb1_1}
	\begin{tabular}{c|c}		
			\textbf{Notations} & \textbf{Description} \\ [0.5ex]
		\hline\hline
			$Exp_t$ & Experience value at the time $t$ \\
		\hline
			$max_{Exp}$ & Maximum value of Experience, normally set to 1 \\
		\hline
			$min_{Exp}$ & Minimum value of Experience, normally set to 0 \\
		\hline
			${Exp_0}$ & Initial Experience value at the bootstrap $t=0$ \\
		\hline
			$\vartheta_t$ & Interaction value (i.e., QoD score) at the time $t$ \\
		\hline
			$\alpha$ & Maximum Increase value, $0<\alpha<max_{Exp}$ \\
		\hline
			$\beta$ & Rate of the Decrease, normally $\beta>1$ \\
		\hline
			$\theta_{co}$ & Cooperative Threshold for the $\vartheta_t$ \\
		\hline
			$\theta_{unco}$ & Uncooperative Threshold for the $\vartheta_t$\\
		\hline
			$\delta$ & Minimum Decay value \\
		\hline		
		  $\gamma$ & Decay Rate\\
	\end{tabular}	\\[0.1ex]
\end{table}

\begin{itemize}
	\item \textbf{Increase Model (due to cooperative interactions)}
\end{itemize}
A cooperative interaction is when $\vartheta_t \geq \theta_{co}$. The Increase function is modeled using a linear difference equation as follows:

\begin{equation}
	\label{eq_1}
	Exp_t = Exp_{t-1} + \vartheta_t {\Delta}Exp_t
\end{equation}

\begin{equation}
	\label{eq_2}
	{\Delta}Exp_t = \alpha(1 - \frac{Exp_{t-1}}{max_{Exp}})
\end{equation}

\begin{itemize}
	\item \textbf{Decrease Model (due to uncooperative interactions)}
\end{itemize}
An uncooperative interaction is when the QoD score $\vartheta_t \leq \theta_{unco}$ threshold. The Decrease function is modeled as follows:

\begin{equation}
	\label{eq_3}
	Exp_t = Max(min_{Exp}, Exp_{t-1} - (1 - \vartheta_t)\beta {\Delta}Exp_t)
\end{equation}
Where ${\Delta}Exp_t$ is already determined by (\ref{eq_2}).

\begin{itemize}
	\item \textbf{Decay Model (due to no or neutral interactions)}
\end{itemize}
Experience TI decays if there is no transaction after a period of time or the interactions are neutral (i.e., $\theta_{unco} < \vartheta < \theta_{co}$). The Decay function is proposed as follows:

\begin{equation}
	\label{eq_4}
	Exp_t = Max(Exp_0, Exp_{t-1} - \Delta{Decay_t})
\end{equation}

\begin{equation}
	\label{eq_5}
	\Delta{Decay_t} = \delta{(1 + \gamma - \frac{Exp_{t-2}}{max_{Exp}})}
\end{equation}

\subsection{Analysis and Discussion for Experience Model}

As we are imitating the relationships seen in human society, it is expected that Experience TI is accumulated from cooperative interactions; and $Exp_{t+1}$ depends on both QoD score $\vartheta_t$ and current value $Exp_t$. Also, a strong relationship should require more and more cooperative interactions to attain. Considering the trust evaluation in which trust values and QoD scores are in the range $(0,1)$, Experience values should be normalized to the range $(0,1)$, thus we set $max_{Exp} = 1$, $min_{Exp} = 0$, and $0 < Exp_0 < 1$. It is obvious that the increase model defined in (\ref{eq_1}) and (\ref{eq_2}) is incremental; and the increase value from time $t$ to time $t+1$ $\vartheta_t\times{\Delta}Exp_{t+1} > 0$ is relatively large when the current value of $Exp_t$ is small and vice versa (considering the same interaction value $\vartheta$), meaning that higher $Exp$ gets more difficult to achieve. 

\begin{lemma}
	The proposed increase function $Exp$ is \textit{always less than $1$} and \textit{asymptotic to $1$}.
\end{lemma}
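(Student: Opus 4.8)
The plan is to analyze the recurrence defined by (\ref{eq_1}) and (\ref{eq_2}) under the standing assumptions $max_{Exp}=1$, $0<Exp_0<1$, $0<\alpha<1$, and $0<\vartheta_t<1$ for cooperative interactions, and show by induction on the number of cooperative interactions that $Exp_t<1$ for all $t$, then establish the asymptotic behaviour separately.

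\textbf{Step 1: the invariant $Exp_t<1$.} Substituting (\ref{eq_2}) into (\ref{eq_1}) gives the closed update
\begin{equation}
	\label{eq_proof_update}
	Exp_t = Exp_{t-1} + \vartheta_t\,\alpha\,(1 - Exp_{t-1}),
\end{equation}
which can be rewritten as $1 - Exp_t = (1 - \vartheta_t\alpha)(1 - Exp_{t-1})$. First I would show inductively that $Exp_{t-1}<1$: the base case is $Exp_0<1$ by assumption, and if $Exp_{t-1}<1$ then $1-Exp_{t-1}>0$; since $0<\vartheta_t<1$ and $0<\alpha<1$ we have $0<\vartheta_t\alpha<1$, hence $0<1-\vartheta_t\alpha<1$, so $1-Exp_t=(1-\vartheta_t\alpha)(1-Exp_{t-1})$ is a product of two strictly positive numbers and is strictly less than $1-Exp_{t-1}$. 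In particular $1-Exp_t>0$, i.e. $Exp_t<1$, which closes the induction. This simultaneously shows the sequence is strictly increasing (as already noted informally before the lemma) and bounded above by $1$.

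\textbf{Step 2: asymptotic to $1$.} Unrolling the one-step relation gives the product form
\begin{equation}
	\label{eq_proof_product}
	1 - Exp_t = (1 - Exp_0)\prod_{k=1}^{t}\bigl(1 - \vartheta_k\alpha\bigr).
\end{equation}
To conclude $Exp_t\to 1$ it suffices to show the partial products tend to $0$. If the interaction scores are bounded away from the cooperative threshold in the sense that $\vartheta_k\ge\theta_{co}>0$ for all cooperative rounds (which is exactly the regime in which the increase model is applied), then each factor satisfies $1-\vartheta_k\alpha\le 1-\theta_{co}\alpha<1$, so $0\le 1-Exp_t\le(1-Exp_0)(1-\theta_{co}\alpha)^{t}\to 0$. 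Hence $Exp_t\uparrow 1$. I would state the asymptotic claim for the natural case of an unbounded stream of cooperative interactions; the monotone-and-bounded argument already guarantees convergence to some limit $L\le 1$, and passing to the limit in (\ref{eq_proof_update}) forces $\vartheta\,\alpha\,(1-L)=0$, hence $L=1$ whenever infinitely many $\vartheta_k$ are bounded below by a positive constant.

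\textbf{Main obstacle.} The delicate point is not the algebra but pinning down the hypotheses: the lemma as phrased speaks only of "the increase function", so one must be explicit that the claim concerns the trajectory restricted to cooperative interactions with $\vartheta_k\in(0,1)$ and $\alpha\in(0,1)$, and that "asymptotic to $1$" presumes the stream of such interactions does not terminate — otherwise $Exp_t$ merely plateaus at whatever value (\ref{eq_proof_product}) gives after the last cooperative round. A secondary subtlety is the boundary behaviour when $\vartheta_k=1$ exactly: then a single factor is $1-\alpha<1$, which is harmless, but if $\alpha$ were allowed to equal $1$ the factor could vanish and $Exp_t$ would hit $1$ in one step, violating the strict inequality; invoking the standing constraint $0<\alpha<max_{Exp}=1$ rules this out cleanly.
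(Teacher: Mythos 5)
Your proof is correct and follows essentially the same route as the paper: both rewrite the update as $1-Exp_t=(1-\vartheta_t\alpha)(1-Exp_{t-1})$, unroll it into the product $(1-Exp_0)\prod_{k=1}^{t}(1-\vartheta_k\alpha)$ to get strict boundedness below $1$, and then use $\vartheta_k\ge\theta_{co}$ to dominate the product by the geometric bound $(1-Exp_0)(1-\theta_{co}\alpha)^{t}\to 0$ (the paper invokes the Squeeze theorem at this step). Your additional remarks on the implicit hypotheses (an unbounded stream of cooperative interactions, and the strictness of $0<\alpha<1$) are reasonable clarifications but do not change the argument.
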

 
\begin{proof}
	From (\ref{eq_1}) and (\ref{eq_2}) with $max_{Exp} = 1$, the $Exp$ function can be re-written as:
	\begin{equation}
		\label{eq_1x}
		Exp_t = Exp_{t-1} + (1 - Exp_{t-1})\vartheta_t\ \alpha
	\end{equation}

	Subtracting both sides of (\ref{eq_1x}) from $1$:

	\begin{align}
		1 - Exp_t &= 1 - (Exp_{t-1} + (1 - Exp_{t-1})\vartheta_t\ \alpha) \nonumber \\
							&= (1 - Exp_{t-1})(1-\vartheta_t\ \alpha) \nonumber \\
							&= (1 - Exp_{t-2})(1-\vartheta_t\ \alpha)(1-\vartheta_{t-1}\ \alpha) \nonumber \\
							&= ... \nonumber \\
							&= (1 - Exp_0)\prod_{i=1}^{t} (1 - \vartheta_i\ \alpha)
		\label{eq_2x}
	\end{align}
	
	According to (\ref{eq_2x}), because $0 < Exp_0$, $\vartheta_t$, and $\alpha < 1$, $1 - Exp_t > 0$; in other words, $ 0 < Exp_t < 1\ \forall{t}$.
	
	Moreover, because $\vartheta_i \geq \theta_{co}; \forall{i} \in \{1,..,t\}$, we have:
	\begin{equation}
		\label{eq_3x}
		0 < 1 - Exp_t \leq (1 - Exp_0)(1 - \theta_{co}\ \alpha)^t
	\end{equation}

Because $0<\theta_{co}$, $\alpha$, and $Exp_0<1$ are three pre-defined parameters, thus:
	\begin{equation}
		\label{eq_4x}
		\lim_{t\to\infty} (1 - Exp_0)(1 - \theta_{co}\ \alpha)^t = 0
	\end{equation}
	
Applying the Squeeze theorem on (\ref{eq_3x}) and (\ref{eq_4x}), we have: $\lim_{t\to\infty} (1 - Exp_t) = 0$. Therefore, the increase of $Exp$ is \textit{asymptotic to $1$}.

\end{proof}

As with the Increase function, the Decrease function in (\ref{eq_3}) is decremental and the decrease value depends on both the current value of $Exp_t$ and the uncooperative $\vartheta_t$ QoD score. It is worth to note that the Decrease rate $\beta$ should be greater than $1$ because a strong relationship (i.e., high $Exp$ value) is difficult to gain but easy to lose (e.g., $\beta = 2$ means that the $Exp$ value decrease due to uncooperative interactions is \textit{twice} compared to the amount gained in the corresponding cooperative interaction). The Decrease function also ensures that strong relationships are more resistant to uncooperative interactions whereas weak relationships are severely damaged.

Regarding to the Decay function, $\delta$ is the minimal decay value which guarantees that even strong relationships still get decreased; and $\gamma$ is the decay rate. In sociology, relationships between people decay over time if participants do not interact, although the decay rates are different depending on the strength of the relationships \cite{r48}. Similarly, the proposed decay model shows that relationships require periodic maintenance, but strong ones tend to persist longer even without reinforcing cooperative interactions. As can be seen in (\ref{eq_4}), the decay value is assumed to be inversely proportional to the current Experience value, thus strong relationships exhibit less decay than weak ones.

\subsection{Reputation Model}
Reputation is a property of a user reflecting the overall opinion of a community about that user. In the MCS environment, especially in urban scenarios with a large number of mobile users, only small numbers of users have already interacted with others, resulting in a very high possibility that a service requester (i.e., the trustor) and a data provider (i.e., the trustee) are new to each other, thus no prior Experience relationship exists between the two. The Reputation of the trustee, therefore, is a vital indicator for the trust evaluation.

As Reputation is an overall opinion, the calculation for the reputation of a user $U$, denoted as $Rep(U)$, needs to take all users $\forall{i}$ that have prior Experience with $U$ into consideration. Intuitively, Reputation can be quantified using a graph analysis algorithm on the Experience relationship graph, which is somewhat similar to the Google PageRank \cite{r49} and the weighted PageRank \cite{r50} approaches. The difference from the two previous models is that each user $i$ contributes differently to $Rep(U)$, in either a positive or negative manner, depending on both $Exp(i,U)$  (i.e., the Experience from $i$ toward $U$) and the user’s reputation (i.e., $Rep(i)$).

To come up with the new model for Reputation, we modify the PageRank models proposed in \cite{r49, r50} by classifying the Experience relationships into two sub-groups: Positive Experiences (i.e., $Exp>\theta$) and Negative Experiences (i.e., $Exp<\theta$) where $\theta$  is a predefined threshold. Let $N$ be the number of users in a MCS system, and $d$ is a damping factor ($0<d<1$) as defined in the standard PageRank \cite{r6_18}. Then, the Reputation model is proposed as a composition of the two components \textit{Positive Reputation} and \textit{Negative Reputation} as follows:

\begin{itemize}
	\item \textbf{Positive Reputation}
\end{itemize}
The positive reputation can be calculated as follows:
\begin{equation}
	\label{eq_6}
	Rep_{Pos}(U) = \frac{1-d}{N} + d(\sum_{\forall{i}}Rep_{Pos}(i)\times\frac{Exp(i, U)}{C_{Pos}(i)})
\end{equation}
Where: $C_{Pos}(i) = \sum_{Exp(i,j)>\theta}{Exp(i,j)}$ is the sum of all positive Experience from user $i$.

\begin{itemize}
	\item \textbf{Negative Reputation}
\end{itemize}
The negative reputation can be calculated as follows:
\begin{equation}
	\label{eq_7}
	Rep_{Neg}(U) = \frac{1-d}{N} + d(\sum_{\forall{i}}Rep_{Neg}(i)\times\frac{1- Exp(i, U)}{C_{Neg}(i)})
\end{equation}
Where: $C_{Neg}(i) = \sum_{Exp(i,j)<\theta}{(1- Exp(i,j))}$ is the sum of all compliment of negative Experience from user $i$.

\begin{itemize}
	\item \textbf{Overall Reputation}
\end{itemize}
Finally, the overall reputation is the combination of the two positive and negative reputations:
\begin{equation}
	\label{eq_8}
	Rep(U) = max(0, Rep_{Pos}(U) - Rep_{Neg}(U))
\end{equation}

\subsection{Mathematical Analysis for Reputation Model}

According to the proposed model, the reputation of a user is recursively calculated from other users' reputations and the corresponding Experience relationships; consequently reputations of all $N$ users (forming a $N$-$element$ vector denoted as $Rep$) in a MCS platform are correlated with each other.  Therefore, this $Rep$ vector might not exist due to the correlations among $N$ users' reputations; or the $Rep$ vector might be ambiguous (i.e., not unique: a user might have more than one reputation value) which is not reasonable.
\begin{lemma}
	The reputation vector $Rep$ calculated by the proposed reputation model \textit{exists} and is \textit{unique}.
\end{lemma}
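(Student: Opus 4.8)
The plan is to reduce the statement for the combined reputation vector $Rep$ to the same statement for its two constituents, $Rep_{Pos}$ and $Rep_{Neg}$, and then to recognise each constituent as the solution of a standard damped-PageRank linear fixed-point equation. First I would observe that (\ref{eq_8}) defines $Rep(U) = \max(0, Rep_{Pos}(U) - Rep_{Neg}(U))$ componentwise, a deterministic single-valued operation; hence $Rep$ exists and is unique as soon as $Rep_{Pos}$ and $Rep_{Neg}$ each exist and are unique. Since (\ref{eq_6}) and (\ref{eq_7}) have identical structure, it suffices to carry out the argument for $Rep_{Pos}$.

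Next I would cast (\ref{eq_6}) in matrix form. Let $\mathbf{1}$ denote the all-ones vector in $\mathbb{R}^N$, and define the $N\times N$ matrix $A$ by $A_{ij} = Exp(i,j)/C_{Pos}(i)$ when $Exp(i,j) > \theta$ and $A_{ij} = 0$ otherwise. Then (\ref{eq_6}) reads $Rep_{Pos} = \tfrac{1-d}{N}\mathbf{1} + d\,A^{\top} Rep_{Pos}$, i.e. $Rep_{Pos}$ is a fixed point of the affine map $T(x) = \tfrac{1-d}{N}\mathbf{1} + d\,A^{\top} x$ on $\mathbb{R}^N$. By the very definition of the normaliser $C_{Pos}(i) = \sum_{Exp(i,j)>\theta} Exp(i,j)$, every nonzero row of $A$ sums to $1$, so $A$ is (sub-)stochastic; equivalently $A^{\top}$ satisfies $\lVert A^{\top} x\rVert_1 \le \lVert x\rVert_1$ for all $x$. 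Consequently $T$ is a contraction in the $\ell_1$-norm with modulus $d < 1$, because $\lVert T(x) - T(y)\rVert_1 = d\,\lVert A^{\top}(x-y)\rVert_1 \le d\,\lVert x-y\rVert_1$. The Banach fixed-point theorem then yields a unique fixed point, which is precisely $Rep_{Pos}$. Equivalently, $\rho(dA^{\top}) = d\,\rho(A^{\top}) \le d < 1$, so $I - dA^{\top}$ is nonsingular and $Rep_{Pos} = \tfrac{1-d}{N}(I - dA^{\top})^{-1}\mathbf{1}$ is the one and only solution. Replacing $A$ by the matrix $B$ with $B_{ij} = (1 - Exp(i,j))/C_{Neg}(i)$ for $Exp(i,j) < \theta$ gives existence and uniqueness of $Rep_{Neg}$ verbatim, and hence of $Rep$. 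As a by-product, the iteration $x^{(k+1)} = T(x^{(k)})$ converges geometrically at rate $d$, which justifies computing the reputations iteratively in practice.

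The step I expect to require the most care is the treatment of \emph{dangling} users --- those $i$ with $C_{Pos}(i) = 0$ (no positive Experience at all), for which the summand in (\ref{eq_6}) is formally $0/0$. The standard PageRank remedy applies: either leave the corresponding row of $A$ as a zero row, or redistribute it as the uniform row $\tfrac{1}{N}\mathbf{1}^{\top}$; in both cases $A$ stays (sub-)stochastic, the contraction constant remains $d < 1$, and the argument is unchanged. A second, minor point to state explicitly is the reading of the index set: the sum $\sum_{\forall i}$ in (\ref{eq_6}) effectively ranges over $i$ with $Exp(i,U) > \theta$ (the remaining terms are zero entries of $A$), which is exactly what makes $C_{Pos}$ the appropriate row-normaliser; the analogous remark applies to (\ref{eq_7}) with the negative threshold. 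Everything else is routine linear algebra.
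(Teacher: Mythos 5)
Your proof is correct, but it takes a genuinely different route from the paper. The paper folds the constant term $\tfrac{1-d}{N}$ into the iteration matrix (writing $Rep_{Pos} = (\tfrac{1-d}{N}E + d\,Exp_{Pos}M^{-1})Rep_{Pos}$, which implicitly assumes the normalisation $\sum_i Rep_{Pos}(i)=1$ so that $E\,Rep_{Pos}=\mathbf{1}$), identifies $Rep_{Pos}$ as an eigenvector with eigenvalue $1$ of a column-stochastic matrix, and then appeals to the random-surfer-with-random-jumps picture: the $\tfrac{1-d}{N}$ term makes the associated Markov chain irreducible, so its stationary distribution exists and is unique. You instead keep the equation in its literal affine form $Rep_{Pos}=\tfrac{1-d}{N}\mathbf{1}+dA^{\top}Rep_{Pos}$ and observe that, since $A$ is row-(sub)stochastic, the map is a $d$-contraction in $\ell_1$ (equivalently $\rho(dA^{\top})\le d<1$, so $I-dA^{\top}$ is invertible), giving existence and uniqueness by Banach's fixed-point theorem. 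Your version buys three things the paper's does not: it needs no unstated normalisation of the reputation vector, it disposes of dangling users (those with $C_{Pos}(i)=0$) explicitly rather than leaving them to the Markov-chain folklore, and it proves geometric convergence at rate $d$ of the iterative scheme --- a fact the paper relies on later (the reported $25$ to $32$ iterations to converge) but never justifies. What the paper's route buys in exchange is the probabilistic interpretation of $Rep_{Pos}$ as a stationary distribution, which connects the model directly to the PageRank literature it cites. Both arguments reduce the overall claim for $Rep$ to the two constituents in the same way, via the deterministic componentwise $\max$ in the combination formula.
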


\begin{proof}
Regarding (\ref{eq_6}), let $M$ be the $N \times N$ diagonal matrix where the diagonal element $m_i = C_{Pos}(i) \forall{i} \in \{1,..,N\}$. Let $Exp_{Pos}$ be a $N \times N$ matrix that:
\begin{equation}
	\label{eq_9}
	Exp_{Pos}(i,j) = 
		\begin{cases}
			Exp(i,j) & \text{if } Exp(j,i) \geq \theta \\
			0 & \text{if } Exp(j,i) < \theta \\
		\end{cases}
\end{equation}

Let $Rep_{Pos}$ be the positive reputation vector consisting of $N$ elements $Rep_{Pos}(i) \forall{i} \in \{1, ..,N\}$. Then, (\ref{eq_6}) can be expressed in matrix notation as follows:
\begin{equation}
	\label{eq_10}
	Rep_{Pos} = (\frac{1-d}{N}{\times}E + d{Exp_{Pos}}{\times}M^{-1}){\times}Rep_{Pos}		
\end{equation}
where $E$ is a $N{\times}N$ matrix of $1s$. Let us define:

\begin{equation}
	\label{eq_11}
	A = \frac{1-d}{N}{\times}E + d{Exp_{Pos}}{\times}M^{-1}
\end{equation}

Thus, (\ref{eq_10}) can be rewritten as:
\begin{equation}
	\label{eq_12}
	Rep_{Pos} = A{\times}Rep_{Pos}		
\end{equation}

Therefore, $Rep_{Pos}$ is the $eigenvector$ of matrix $A$ with $eigenvalue = 1$. We now prove that the $eigenvector$ $Rep_{Pos}$ of the matrix $A$ \textit{exists} and is \textit{unique}. Equation (\ref{eq_9}) and (\ref{eq_10}) is reminiscent of the stationary distribution of a Markov chain which moves among the set of $N$ states from $1$ to $N$ with the $N{\times}N$ transition matrix $P$ where $P($go from state $i$ to state $j$$) = P(i,j)$. 

Let us consider a discrete-time Markov chain defined by a set of states as the $N$ entities and a transition probability matrix $P = A^T$:
\begin{equation}
	\label{eq_13}
	P(i,j) = A^T(i,j) = A(j,i) = \frac{1-d}{N} + d\frac{Exp_{Pos}(j,i)}{m(j)}
\end{equation}

Consequently, the Markov chain can be defined as following:
\begin{equation}
	\label{eq_14}
	P(i,j) = 
		\begin{cases}
			\frac{1-d}{N} + d\frac{Exp_{Pos}(j,i)}{m(j)} & \text{if } Exp(j,i) \geq \theta \\
			\frac{1-d}{N} & \text{if } Exp(j,i) < \theta \\
		\end{cases}
\end{equation}

Fortunately, this turns to a model of \textit{random surfer} with \textit{random jumps} as in the edge-weighted PageRank model \cite{r6_29}. This leads us to show the Markov chain is strongly connected, and the $Rep_{Pos}$ vector, which is the \textit{stationary distribution} of the Markov chain, is \textit{unique} \cite{r6_18, r6_26, r6_28}.

Similarly, the $Rep_{Neg}$ vector from (\ref{eq_7}) \textit{exists} and is \textit{unique}. As a consequence, the overall reputation vector $Rep$ defined in (\ref{eq_8}) also \textit{exists} and is \textit{unique}.

\end{proof}

\subsection{Final Trust Value}
A trust value is an aggregation of both the Experience and Reputation values. There are a variety of techniques for combining the two TIs such as Bayesian neutron networks, fuzzy logic, and machine learning depending on the specific use-cases and individual users’ preferences. A simple weighted sum for calculating a final trust value between trustor A and trustee B is used as follows:

\begin{equation}
	\label{eq_15}
	Trust(A, B) = w_1 Rep(B) + w_2 Exp(A, B)
\end{equation}
Where $w_1, w_2>0$ are weighting factors satisfying $w_1+w_2=1$. The weighting factors can be autonomously tuned using different techniques such as machine learning and semantic reasoning.
\section{SIMULATION TESTBED AND USER RECRUITMENT SCHEMES} \label{STU}
This section presents a MCS testbed in which the trust-based user recruitment is simulated along with two other schemes called Average and Polynomial Regression predictive models \cite{r51}.

\subsection{User Models in MCS}
Some statistics and analysis were carried out on QoD scores in a real-time data stream collected from traffic sensors\footnote{\url{https://mu.tlmat.unican.es:8443/v2/entities?limit=1&type=ParkingSpot}} and parking sensors\footnote{\url{https://mu.tlmat.unican.es:8443/v2/entities?limit=1&type=TrafficFlowObserved}} deployed in the city of Santander, Spain as part of the Wise-IoT project. Histograms of QoD from various sensors were analyzed and normalized in the range $(0, 1)$. Based on this histogram, we have observed that the QoD score distribution from any sensor nicely fits to the Beta probability distribution family. By using a Beta parameter estimation mechanism, we categorize users in a MCS system into three groups based on their QoD score distribution as follows:

\begin{figure}[ht]
	\centering
	\captionsetup{justification=centering}
	\includegraphics[width=0.48\textwidth]{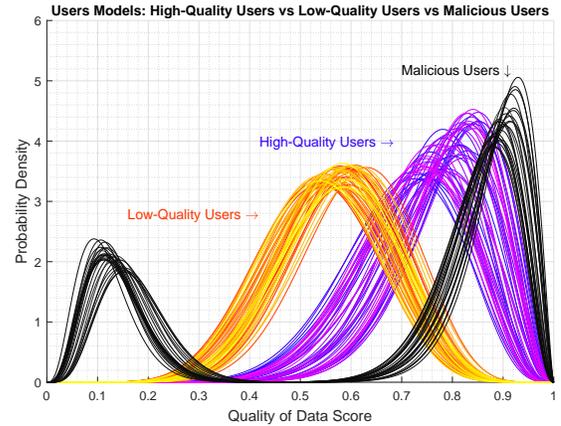}
	\caption{User Models in MCS systems}
	\label{fig6}
\end{figure}

\begin{itemize}
	\item \textbf{High-Quality Users}
\end{itemize}
High quality users consistently produce high QoD scores in most sensing tasks. Based on the statistical information, QoD scores from a high-quality user distribute in the interval $(0, 1)$ but the highest distribution is in the range $(0.75-0.85)$. QoD scores from a high-quality user follow a unimodal Beta distribution with two positive shape parameters $Beta(\alpha_{high}, \beta_{high})$ satisfying $10<\alpha_{high}<15$ and $3<\beta_{high}<5$. The probability density function (PDF) of the Beta distributions for 50 high-quality users are shown in Fig. \ref{fig6}.

\begin{itemize}
	\item \textbf{Low-Quality Users}
\end{itemize}
Low-quality users consistently produce average or below-average QoD scores in most of the sensing tasks. QoD scores are in the $(0, 1)$ interval but mostly fall in the range $(0.5-0.65)$. Similar to high-quality users, QoD scores from a low-quality user follow a unimodal Beta distribution with the two positive shape parameters $Beta(\alpha_{low}, \beta_{low})$ satisfying $9<\alpha_{low}<12$ and $7<\beta_{low}<9$. The PDF of the Beta distribution for 50 low-quality users are depicted in Fig. \ref{fig6}.

\begin{itemize}
	\item \textbf{Intelligent Malicious Users}
\end{itemize}
Even though no data from malicious smart devices was collected, a feasible intelligent malicious user might follow the behaviors below:

\begin{itemize}
	\item Normally produces very high QoD scores in order to pose as a strong candidate for recruitment schemes.
	\item Unpredictably and intentionally produces very low-quality data once the user is recruited in a sensing task to destroy a targeted MCS service. The service will be heavily damaged if the data is used for fulfilling requested services.
\end{itemize}
According to the above description, the malicious user model follows a bi-modal Beta distribution. Thus, firstly we define two Beta distribution models, one for very high QoD scores $Beta(\alpha_{mhigh}, \beta_{mhigh})$, satisfying $18<\alpha_{mhigh}<22$ and $2.5<\beta_{mhigh}<3.5$; and one for very low QoD scores $Beta(\alpha_{mlow}, \beta_{mlow})$, satisfying $4<\alpha_{mlow}<6$ and $25<\beta_{mlow}<35$. Then the two Beta distributions are mixed in order to form the desired bimodal Beta distribution $BiBeta$ using a mixture coefficient parameter $\gamma$ as follows:

\begin{equation}
	\label{eq_16}
	\begin{aligned}
		PDF(BiBeta)={\gamma}PDF(Beta(\alpha_{mhigh}, \beta_{mhigh})) \\
		+(1-\gamma)PDF(Beta(\alpha_{mlow}, \beta_{mlow}))
	\end{aligned}
\end{equation}

Fig. \ref{fig6} also illustrates 25 malicious users with the mixture coefficient $\gamma=0.7$, meaning that the users follow the $Beta(\alpha_{mhigh}, \beta_{mhigh})$ in $70\%$ of the sensing tasks (providing high quality data) and provide very low quality data in $30\%$ of the sensing tasks (i.e., following the $Beta(\alpha_{mlow}, \beta_{mlow})$).

\subsection{QoS Evaluation Model for MCS Services}
To evaluate and compare the effectiveness between different user recruitment schemes in the performance of MCS services, a QoS evaluation model is proposed. Low-quality data lowers system efficiency and misleads system operations that directly leads to customer dissatisfaction \cite{r52}. Low-quality data also increases system operational overheads and cost; and imposes vulnerabilities and risks to the system. Some QoS evaluation models for IoT services have been proposed, taking into consideration different factors at various layers of the IoT infrastructure \cite{r54}; and the QoD is one of the pivotal factors in the evaluation of QoS for MCS services.

Considering a service request $R$ that comprises of $T$ sensing tasks $ST_R(i); \forall{i} \in \{1,..,T\}$; each sensing task $ST_R(i)$ is fulfilled by $P_i$ participants providing $P_i$ datasets with $QoD_{ST_R(i)}(j); \forall{j} \in \{1,..,P_i\}$ respectively. The QoS for the service R is calculated as follows:

\begin{equation}
	\label{eq_17}
	QoS(request R) = \frac{T}{\left|\log({\prod_{i=1}^{T}QoD_{ST_R(i)}})\right|}
\end{equation}
\begin{equation}
	\label{eq_18}
	QoD_{ST_R(i)} = \frac{\sum_{j=1}^{P_i}QoD_{ST_R(i)}(j)}{P_i}
\end{equation}

Equation (\ref{eq_17}) depicts that the QoS of the service request $R$ is proportional to the QoD scores of each the sensing task $QoD_{ST_R(i)}; \forall{i} \in \{1,..,T\}$, represented by the product of the natural logarithm of these scores. The $QoD_{ST_R(i)}$ score of the sensing task $ST_R(i)$ is calculated by taking the average of the QoD scores from the $P_i$ contributors associated to the sensing task. This is because contributors in the same sensing task are normally required to collect the same sort of data; such redundant datasets are then filtered and pre-processed to retrieve a high-quality dataset before processing and mining. However, the number of participants in each sensing task should be small enough in order to not incur significant computation and storage overheads. Nevertheless, user recruitment plays a crucial role in providing high-quality services because even in a sensing task fulfilled by many participants, some attackers providing extremely low QoD data could result in massive damage to MCS services.

\subsection{Trust-based, Average, and Polynomial Regression User Recruitment Schemes}
Generally, all three recruitment schemes have the same purpose of recruiting mobile device users that are expected to provide high QoS scores for sensing tasks in a MCS service request. The algorithms to recruit users in the three schemes rely only on QoD scores of sensing data contributed by users who have been recruited in previous sensing tasks. The Trust-based recruitment scheme uses trust relationships between a service requester and other users for recruiting participants. The Average-QoD and Polynomial Regression-QoD schemes use the two popular predictive schemes; namely Average and Polynomial Regression, respectively, for predicting the QoD scores, and recruiting users who are likely to provide the highest QoD scores for the next sensing task accordingly.

For the comparison among the recruitment schemes, all of the algorithms have the same inputs consisting of $N$ Users, $M$ Service Requests, and associated sensing tasks and the same output as the QoS score for the $M$ requested services:
\begin{algorithm}
	\SetKwInOut{Input}{Input}
	\SetKwInOut{Output}{Output}
	\Input{$N$ Users. $M$ Service Requests $R(i)$ $\forall{i} \in \{1,..,M\}$. Each $R(i)$ requires $T_i$ Sensing Tasks $ST_{R(i)}(j)$ $\forall{j} \in \{1,..,T_i\}$ and $\forall{i} \in \{1,..,M\}$. Each $ST_{R(i)}(j)$ is fulfilled by $P_{ij}$ participants $\forall{j} \in \{1,..,T_i\}$.}

	\Output{$QoS$ scores for the $M$ Service Requests}
	\caption{Inputs and Outputs for User Recruitment Algorithms}\label{inout_alg}
\end{algorithm}

Then, the three algorithms are demonstrated in mathematical-style pseudo-code as follows:

\begin{itemize}
	\item \textbf{Trust-based User Recruitment scheme}
\end{itemize}
This scheme establishes and maintains trust relationships between users based on the E-R trust model proposed in Section IV and recruits users with the highest trust values with a service requester. As can be seen in Algorithm \ref{trustbased}, it firstly initiates the matrices EXP, REP and TRUST for keeping track of Experience relationships, Reputation values, and Trust relationships for $N$ users (line \#1). The output at the beginning state is set to 0 (line \#2). For each request $R(i)$ from a user $U(i)$ and for each sensing task $ST_{R(i)}(j)$, the algorithm recruits participants that have the highest trust values with $U(i)$ (line \#5). When the sensing task has been accomplished, the algorithm calculates the QoD score for the sensing data collected from the recruited users and updates EXP, REP and TRUST accordingly (line \#6 to line \#9). Finally, the output is updated by adding the QoS score of the requested service $R(i)$ (line \#11).

\begin{algorithm}
	\SetKwFunction{Update}{Update}
	\SetKwFunction{Recruit}{Recruit}
	\SetKwFunction{QoD}{QoD}
	\SetKwFunction{QoS}{QoS}

	\textbf{Initialization}
		TRUST[][], EXP[][], REP[]; \;
		$out$ = 0; \;

	\BlankLine
	\ForEach{request $R(i)$ from user $U(i)$}{
		\ForEach{sensing task $ST_{R(i)}(j)$}{	
				\Recruit{$P_{ij}$ users with highest TRUST[$P_{ij}$ users][$U(i)$]}\;
				\QoD{Sensing data from $P_{ij}$ users}\;
				\Update{EXP[$U(i)$][$P_{ij}$ users]}\;
				\Update{REP[]}\;
				\Update{TRUST[][]}\;
		}
		$out$ $\leftarrow$ $out$ + \QoS{$R(i)$};
	}
	\textbf{Return} $out$
	\caption{Trust-Based Recruitment Algorithm}\label{trustbased}
\end{algorithm}

\begin{itemize}
	\item \textbf{Average-QoD User Recruitment scheme}
\end{itemize}

This scheme maintains a list of the average QoD scores for $N$ users and recruits participants with highest average QoD scores. As can be seen in Algorithm \ref{avgbased}, it initiates the AVG matrix for keeping track of the average QoD scores for $N$ users (line \#1). The output at the beginning state is set to 0 (line \#2). For each request $R(i)$ from a user $U(i)$ and for each sensing task $ST_{R(i)}(j)$, the algorithm simply recruits participants with the highest average QoD score (line \#5). When the sensing task has been accomplished, the algorithm calculates the QoD score for the sensing data collected from the recruited users (line \#6) and updates the AVG matrix accordingly (line \#7). Finally, the output is updated by adding the QoS score of the requested service $R(i)$ (line \#9).

\begin{algorithm}
	\SetKwFunction{Update}{Update}
	\SetKwFunction{Recruit}{Recruit}
	\SetKwFunction{QoD}{QoD}
	\SetKwFunction{QoS}{QoS}
	
	\textbf{Initialization}
		AVG[]; \;
		$out$ = 0; \;

	\BlankLine
	\ForEach{request $R(i)$ from user $U(i)$}{
		\ForEach{sensing task $ST_{R(i)}(j)$}{	
				\Recruit{$P_{ij}$ users with highest AVG[] score}\;
				\QoD{Sensing data from $P_{ij}$ users}\;
				\Update{AVG[$P_{ij}$ users]}\;
		}
		$out$ $\leftarrow$ $out$ + \QoS{$R(i)$};
	}
	\textbf{Return} $out$
	\caption{Average-based QoD Recruitment Algorithm}\label{avgbased}
\end{algorithm}

\begin{itemize}
	\item \textbf{Polynomial Regression-based QoD User Recruitment scheme}
\end{itemize}

This scheme maintains a history of QoD scores that $N$ users have contributed to the MCS system and recruits participants based on a prediction on QoD scores for next sensing tasks using a polynomial regression model. The 3-degree polynomial model by means of the least-square fit method is used as the predictive model in the algorithm.

As can be seen in Algorithm \ref{polybased}, it initiates the $QoDScore$ matrix for storing the history of QoD scores in previous sensing tasks for $N$ users (line \#1). The output at the beginning state is set to 0 (line \#2). For each request $R(i)$ from a user $U(i)$ and for each sensing task $ST_{R(i)}(j)$, the algorithm uses the $polyfit$ and $polyval$ functions for finding the coefficients and predicting the next QoD scores for each user (line \#5, line \#6); then, it recruits users with highest predicted QoD scores (line \#7). When the sensing task has been accomplished, the algorithm calculates the QoD score for the sensing data collected from the recruited users (line \#8) and updates the $QoDScore$ matrix accordingly (line \#9). Finally, the output is updated by adding the QoS score of the requested service $R(i)$ (line \#11).

\begin{algorithm}
	\SetKwFunction{Polyfit}{polyfit}
	\SetKwFunction{Polyval}{polyval}
	\SetKwFunction{Update}{Update}
	\SetKwFunction{Recruit}{Recruit}
	\SetKwFunction{QoD}{QoD}
	\SetKwFunction{QoS}{QoS}

	\textbf{Initialization}
		QoDScore[][]; \;
		$out$ = 0; \;
	\BlankLine
	\ForEach{request $R(i)$ from user $U(i)$}{
		\ForEach{sensing task $ST_{R(i)}(j)$}{
				f = \Polyfit{(t, QoDScore[][],3)};\;
				\Polyval{(f, t+1)}; \;
			
				\Recruit{$P_{ij}$ users with highest predicted QoD score}\;
				\QoD{Collected Data from $P_{ij}$ users}\;
				\Update{QoDScore[$P_{ij}$ users]}\;
		}
		$out$ $\leftarrow$ $out$ + \QoS{$R(i)$};
	}
	\textbf{Return} $out$
	\caption{Polynomial Regression-based QoD Recruitment Algorithm}\label{polybased}
\end{algorithm}

\section{SIMULATION RESULTS AND DISCUSSIONS} \label{RESULT}
The testbed is implemented in Matlab containing a set of users consisting of low-quality, high-quality and malicious users, a number of service requests, and the three user recruitment schemes. For comparison purposes, all three schemes take the same inputs (i.e., the set of users and the service requests) and produce outputs as QoS scores for the requested services. The source code for the implementation can be found here\footnote{\url{https://github.com/nguyentb/MCS_project}}.

\subsection{Parameter Settings for Experience Model}
As discussed in Section IV.B, $max_{Exp}$ and $min_{Exp}$ are set to $1$ and $0$, respectively. $Exp_0$ is set to $0.3$ at the bootstrap state. According to the statistics of the QoD scores discussed in Section V.A, if a user provides a dataset with a QoD score $\geq 0.6$ then it is a cooperative interaction; otherwise if the QoD score is $\leq 0.3$ meaning that the user provides a very low-quality dataset, then it is an uncooperative interaction. $\alpha$ is the maximum Increase value and the smaller the $\alpha$ is, the more interactions are required to get a strong relationship. As can be seen in Fig. \ref{fig_6x}, we set $\alpha = 0.1$, as a result, it takes more than $15$ interactions in order to attain a strong relationship (i.e., the Experience value $\geq 0.9$). Similar experiments were conducted to come up with the other controlling parameters for the Decrease model and Decay model (i.e., $\delta$ and $\gamma$) for forming reasonable curves as shown in Fig. \ref{fig_6x}.

\begin{figure}[ht]
\centering
	\includegraphics[width=0.5\textwidth]{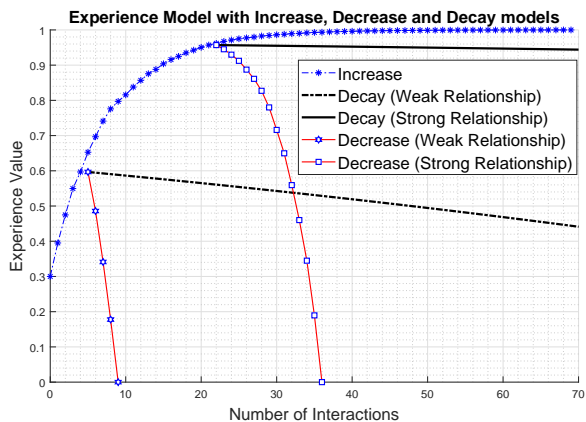}
		\caption{Experience Model with Increase, Decrease and Decay models}
\label{fig_6x}
\end{figure}

Note that different use-cases might result in different parameter settings, depending on how difficult it is to build up a strong relationship as well as to lose and decay the relationship. Details for the parameters used in this article are shown in Table \ref{tb1}.

\begin{table}[h]
	\centering
	\captionsetup{justification=centering}
	\caption{PARAMETERS SETTING FOR THE EXPERIENCE MODEL}
	\label{tb1}
	\begin{tabular}{c|c||c|c}		
			\textbf{Parameters} & \textbf{Values} & \textbf{Parameters} & \textbf{Values} \\ [0.5ex]
		\hline\hline
			$max_{Exp}$ & 1 & $\gamma$ & 0.005 \\
		\hline
			$min_{Exp}$ & 0 & $\delta$ & 0.005 \\
		\hline
			$Exp_0$ & 0.3 & $\theta_{unco}$ & 0.3 \\
		\hline
			$\alpha$ & 0.1 & $\theta_{co}$ & 0.6 \\
		\hline
		$\beta$ & 2 &  & \\
	\end{tabular} \\[0.1ex]
\end{table}

\subsection{Calculation Mechanism for the Reputation Model}

The Reputation mechanism in a MCS system can be calculated either algebraically or iteratively. The traditional algebra method to solve the matrix equations in (\ref{eq_11}) and (\ref{eq_12}) takes roughly $N^3$ operations that is not suitable for a large number of users ($N$ is the network size, i.e., the number of users). On the other hand, the iterative method is much faster because the $Rep_{Pos}$  and $Rep_{Neg}$ vectors converge after conducting a number of iterations \cite{r55}. We therefore use the second method in this simulation and, with the damping factor set to $0.85$, the $error\_tolerance = 10^3$, and the number of users ranging from $200$ to $1000$, it takes from $25$ to $32$ iterations to converge. This reputation calculation is suitable for huge networks like the IoT as the scaling factor is roughly linear in logarithm of $N$ \cite{r07}.

\begin{itemize}
	\item \textbf{Testbed simulation scenarios}
\end{itemize}
The number of service requests is varied from $1$ to $160$, and without the loss of generality, we assume that each service request is fulfilled by a random number of sensing tasks from $5$ to $15$. Each sensing task requires a number of users from $5$ to $200$ (50\% of the total users). The total number of users $N$ is set at $400$; and the number of malicious users is varied from 0\% to 25\% of $N$. We also assume that a user can participate in several tasks simultaneously.

\subsection{Results and Discussion}
We have implemented the three algorithms outlined above in the simulation and, for better observation, we have also implemented a random selection method as the simplest recruitment scheme. As can be seen in Fig. \ref{fig7}, the Trust-based scheme outperforms all other schemes in most of the cases, meaning that the quality of the requested services using the proposed trust-based user recruitment is better than the other schemes. 	All the schemes, except the Random Selection, produce better QoS scores as more requested services are served. However, after a period of about $15$ requests (i.e., the learning phase), the Trust-based scheme achieves consistent QoS scores for following services whereas the Average-based and the Polynomial Regression take about $35$ and $70$ requests, respectively. After the learning phase, the Trust-based scheme persistently achieves the highest QoS scores compared to the other schemes at about $3.35$ to $3.55$, whereas the Average-based scheme fluctuated between $3.10$ and $3.35$ while the Regression outcomes steadily increased and ultimately reaches about $3.25$ to $3.40$.

\begin{figure}[ht]
	\centering
	\captionsetup{justification=centering}
	\includegraphics[width=0.48\textwidth]{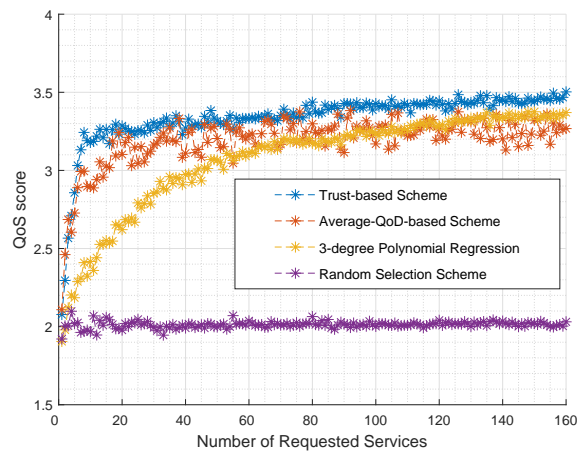}
	\caption{QoS scores after numbers of services using different User Recruitment schemes}
	\label{fig7}
\end{figure}

The three schemes all learn from previous data contributors for maximizing the outcomes. However, with the exception of the Trust-based scheme, they fail to detect malicious users. That is why some malicious users are still recruited in these schemes resulting in lowering the QoS scores for requested services. This is understandable because the Average-based scheme will consider malicious users to be high-quality users due to their average QoD scores being similar. Compared to the Average-based scheme, the Regression method produces just slightly better QoS scores and is more consistent after a long learning phase. This is because malicious users contribute high-quality data most of the time so that low-quality data, which rarely occurs, could be considered as outliers in the regression model. As such, some malicious users are quantified as high-quality users. The regression model also requires more data points for a more accurate prediction, resulting in the longer learning phase.

\begin{figure*}[!ht]
	\centering
	\captionsetup{justification=centering}
	\includegraphics[width=\textwidth]{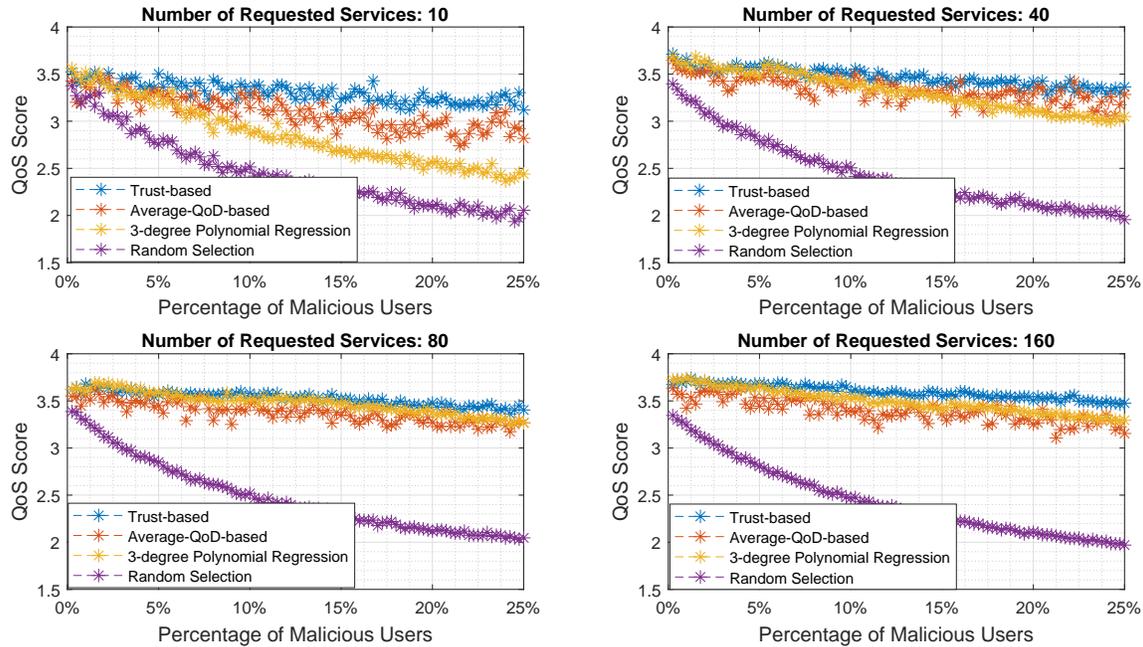}
	\caption{QoS scores in different Percentages of Malicious Users using different User Recruitment Schemes}
	\label{fig8}
\end{figure*}

Unlike these two schemes, the E-R model heavily penalizes a user who sometimes produces very low QoD scores, resulting in rapid drops in the trust relationship and the reputation value of that user. By looking at the reputation vector for all users after the learning phase, we notice that reputation values of malicious users are normally lower than low-quality users and far lower than high-quality users. Considering a scenario in which the number of malicious users is 10\% ($40$ malicious users out of $400$ total users), we examined the users with the lowest reputation values after the learning phase (i.e., after $20$ service requests). As can be seen in Table \ref{tb2}, 80\% of the malicious users are detected just by looking at 10\% of the users having the lowest reputation values. Moreover, in the $80$ users (20\% of the total users) with the lowest reputation values there are $35$ out of $40$ malicious users.
That is why after the learning phase, the trust-based scheme tends to avoid recruiting these malicious users; because there is a very high possibility that a low reputation value results in a low overall trust value.

\begin{table}[h]
	\centering
	\captionsetup{justification=centering}
	\caption{LOWEST REPUTATION VALUES IN ACCORDANCE WITH PERCENTAGE OF USERS TYPES}
	\label{tb2}
	\begin{tabular}{c|c|c|c}
		\textbf{Lowest Reputation} & \textbf{Malicious} & \textbf{Low-Quality} & \textbf{High-Quality} \\ [0.5ex]
		\hline\hline
		10 users (2.5\%) & 10 (100\%) & 0 (0\%) & 0 (0\%) \\
		\hline
		20 users (5\%) & 19 (95\%) & 1 (5\%) & 0 (0\%) \\
		\hline
		30 users (7.5\%) & 26 (87\%) & 4 (13\%) & 0 (0\%) \\
		\hline
		40 users (10\%) & 32 (80\%) & 8 (20\%) & 0 (0\%) \\
		80 users (20\%) & 35 & 43 & 2 \\
	\end{tabular} \\[0.1ex]
\end{table}

We also examined scenarios in which the number of malicious users are varied. As shown in Fig. \ref{fig8}, the percentage of malicious users over total users is increased (i.e, from 0\% to 25\% of the total number of users), the QoS is also decreased (in all scenarios with different numbers of requested services (i.e., 10, 40, 80, 160 requested services)). This is inevitable because the possibility of recruiting malicious users is higher. However, as the number of requested services increase, the QoS scores from all recruitment schemes, except the Random Selection, get higher. For instance, at 15\% of malicious users, the QoS scores from the Trust-based scheme increased from about $3.2$, $3.35$, $3.5$ and $3.6$ after serving $10$, $40$, $80$ and $160$ services, respectively.

As can also be seen in Fig. \ref{fig8}, as the number of malicious users increase, the gap in QoS scores between the Trust-based scheme and the others schemes expands, especially when more requested services are served, showing the advantages of the Trust-based scheme in untrustworthy environments. For example, when the number of requested services is $160$ (as shown in the below-right subplot of Fig. \ref{fig8}), with 10\% of malicious users, the QoS scores obtained from the Trust-based scheme and the Regression scheme are $3.65$ and $3.58$, respectively; with 25\% of malicious users, the QoS scores are $3.49$ and $3.31$. Therefore, the QoS score gap between the proposed Trust-based and the Regression schemes increases from $3.65-3.58=0.07$ to $3.49-3.31=0.18$.

If the percentage of malicious users is less than 10\%, the Average-based scheme seems the best option which offers similar QoS scores but requires less computing resources. Unlike the Experience model, the Reputation model requires significant computational resources. Thus, it is not necessarily desirable to execute the Reputation mechanism in every evaluation of trust. In reality, the reputation mechanism should be periodically performed, which could drastically save time and computational resources.
\section{Conclusion and Future Work} \label{CFW}
In this article we propose a trust evaluation mechanism to create and maintain trust relationships between mobile device users in a MCS platform called E-R. To establish and manage the trust relationships, we introduce the concept of virtual interactions in a centralized MCS platform, forming when a user contributes data for a sensing task from a service requester. Such interactions are quantified by the assessment of the quality of the contributed data; and used as the inputs for the calculation of the two indicators of trust: Experience and Reputation; and the trust relationships between the MCS users are attained by incorporating these two TIs. Based on the trust relationships, a trust-based user recruitment scheme in a MCS platform is proposed for selecting the most trustworthy users with the purpose of contributing high-quality data.

In order to show the effectiveness of the proposed E-R trust mechanism and the trust-based user recruitment, we simulate a MCS testbed consisting of both normal and malicious users with the deployment of the trust-based recruitment scheme along with three other recruitment mechanisms for comparison. The results reveal that the trust-based scheme outperforms the other schemes as it provides better QoS for MCS services in most cases. The trust-based scheme is also able to envisage different types of users including intelligent malicious users, preventing the from being recruited for sensing tasks. Moreover, the proposed recruitment mechanisms is practically implemented in real-world IoT services as we have done in the Wise-IoT project\footnote{\url{http://wise-iot.eu/2018/03/29/march-2018-8}}, which is also an achievement over other recruitment mechanisms which rely on unrealistic assumptions.

This article opens some future research directions. The first direction is an automatic adaptation of parameter settings for the Experience and Reputation models in a context-aware manner. Different MCS systems have different characteristics and types of users which need to be examined, meaning that the QoD assessment, the user models and the QoS evaluation model could also be different. This opens a second research direction for customizing the proposed mechanism for specific MCS use-cases. For example, the trustworthiness of data contributors can also be used in a crowd-sensing data model for better handling of noisy and unreliable data from mobile users, which could effectively improve the data quality in MCS systems \cite{r1_3}. A fourth direction is the integration of the Knowledge TI that contains various useful information of MCS systems. This could result in even more precise indications of trustworthy mobile users; or the integration of other mechanisms like Incentive for a better recruitment scheme.

\section*{Acknowledgment}
This research was supported by the Basic Science Research Program through the National Research Foundation of Korea (NRF) funded by the Ministry of Science and ICT (2018R1A2B2003774).

\bibliographystyle{IEEEtran}
\bibliography{biblio}

\begin{IEEEbiography}[{\includegraphics[width=1in,height=1.25in,clip,keepaspectratio]{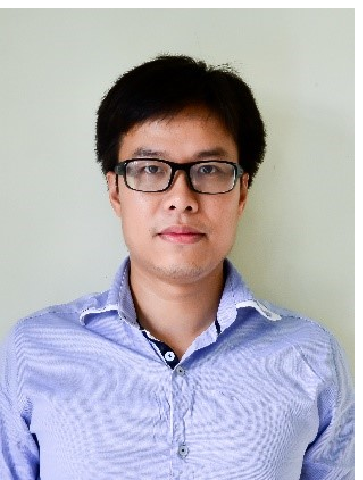}}]{Nguyen Binh Truong}
Dr. Nguyen B.Truong is currently a Research Associate at Data Science Institute, Department of Computing, Imperial College London, United Kingdom. He received his PhD. and Master degrees from Liverpool John Moores University, United Kingdom and Pohang University of Science and Technology, Korea in 2018 and 2013, respectively. He was a Software Engineer at DASAN Networks, a leading company on Networking Products and Services in South Korea from 2012 to 2015. His research interest is including, but not limited to, Security, Privacy and Trust for IoT, Blockchain, Personal Data Management, Fog, Edge and Cloud Computing.
\end{IEEEbiography}

\begin{IEEEbiography}[{\includegraphics[width=1in,height=1.25in,clip,keepaspectratio]{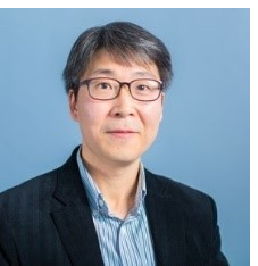}}]{Gyu Myoung Lee}
Dr. Gyu Myoung Lee received his BS degree in electronic and electrical engineering from Hong Ik University, Seoul, Rep. of Korea, in 1999 and MS, and PhD. degree from the Korea Advanced Institute of Science and Technology (KAIST), Daejeon, Rep. of Korea, in 2000 and 2007, respectively. He is currently a Reader in the Department of Computer Science at Liverpool John Moores University, Liverpool (LJMU), UK. He is also with KAIST Institute for IT convergence, Daejeon, Rep. of Korea, as an adjunct professor. His research interests include future networks, Internet of Things, multimedia services, and energy saving networks including Smart Grid. He has actively contributed to standardization in ITU-T as a Rapporteur (currently Q16/13 and Q4/20), oneM2M and IETF. He is also the chair of the ITU-T Focus Group on data processing and management to support IoT and Smart Cities \& Communities.
\end{IEEEbiography}

\begin{IEEEbiography}[{\includegraphics[width=1in,height=1.25in,clip,keepaspectratio]{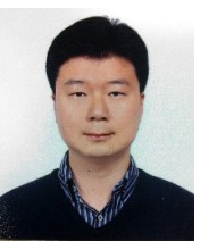}}]{Tai-Won Um}
Dr. Tai-Won Um received the BS degree in electronic and electrical engineering from Hong Ik University, Seoul, Korea, in 1999, and the MS and PhD degrees from the Korea Advanced Institute of Science and Technology (KAIST), Daejeon, Rep. of Korea, in 2000 and 2006, respectively. He is currently an associate professor with Chosun University, Gwangju, Korea. He has been actively participating in standardization meetings including ITU-T SG 13 (Future Networks including mobile, cloud computing and NGN).
\end{IEEEbiography}

\begin{IEEEbiography}[{\includegraphics[width=1in,height=1.25in,clip,keepaspectratio]{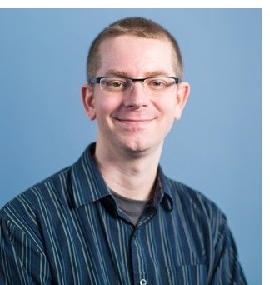}}]{Michael Mackay}
Dr. Michael Mackay is a Senior Lecturer in Computing at LJMU and a member of the PROTECT Research Centre. Michael received a BSc (hons) in Computer Science from Lancaster University in 2000 and a PhD in IPv6 transitioning from the same institution in 2005. Upon completion of his doctoral degree, he was employed as a Senior Research Associate at Lancaster. Dr Mackay has been involved in a number of EU research projects including 6NET (2004), ENTHRONE 2 (2006), EC-GIN (2008), and Wi-5 (2018) in addition to other work funded by the EPSRC, JISC (UK), and Innovate UK. He is also widely published in a wide range of research areas focused around networking technologies including Wireless Communications and Future Internet, QoS and content delivery, and more recently Cloud Computing and the Internet of Things.
\end{IEEEbiography}

\end{document}